\newcommand{\setfootnotemark}{%
  \refstepcounter{footnote}%
  \footnotemark[\value{footnote}]}
\newlength{\myeqskip}  \setlength{\myeqskip}{1.0pt}
\titlespacing*{\section}{0pt}{1.0\baselineskip}{0.5\baselineskip}
\titlespacing*{\subsection}{0pt}{0.5\baselineskip}{0.5\baselineskip}
\providecommand{\paragraph}{}
\renewcommand{\paragraph}[1]{\noindent{\normalsize\bf #1}{\nobreak}}
\newcommand{\newterm}[1]{{\bf #1}}
\newcommand{\best}[1]{{\bf #1}}
\def\eqref#1{equation~\ref{#1}}
\def\1{\bm{1}}
\def\vone{{\bm{1}}}
\def\vb{{\bm{b}}}
\def\vh{{\bm{h}}}
\def\vk{{\bm{k}}}
\def\vm{{\bm{m}}}
\def\vq{{\bm{q}}}
\def\vx{{\bm{x}}}
\def\vy{{\bm{y}}}
\def\mK{{\bm{K}}}
\def\mQ{{\bm{Q}}}
\def\mS{{\bm{S}}}
\def\mW{{\bm{W}}}
\DeclareMathAlphabet{\mathsfit}{\encodingdefault}{\sfdefault}{m}{sl}
\SetMathAlphabet{\mathsfit}{bold}{\encodingdefault}{\sfdefault}{bx}{n}
\def\gX{{\mathcal{X}}}
\def\gY{{\mathcal{Y}}}
\def\sN{{\mathbb{N}}}
\newcommand{\R}{\mathbb{R}}
\DeclareMathOperator*{\argmax}{arg\,max}
\newcommand{\normal}{\mathcal{N}}
\newtheorem{theorem}{Theorem}[section]
\newtheorem{lemma}[theorem]{Lemma}
\theoremstyle{definition}
\newtheorem{definition}{Definition}[section]
\title{Scoring {Time} Intervals using Non-Hierarchical Transformer for Automatic Piano Transcription}
\begin{document}

\maketitle
\begin{abstract}
The neural semi-Markov Conditional Random Field (semi-CRF) framework has demonstrated promise for event-based piano transcription. 
In this framework, all events (notes or pedals) are represented as closed {time} intervals tied to specific event types.
The neural semi-CRF approach requires an interval scoring matrix that assigns a score for every candidate interval.
However, designing an efficient and expressive architecture for scoring intervals is not trivial.
This paper introduces a simple method for scoring intervals using scaled inner product operations that resemble how attention scoring is done in transformers.
We show theoretically that, due to the special structure from encoding the non-overlapping intervals, under a mild condition, the inner product operations are expressive enough to represent an ideal scoring matrix that can yield the correct transcription result. 
We then demonstrate that an encoder-only non-hierarchical transformer backbone, operating only on a low-time-resolution feature map, is capable of transcribing piano notes and pedals with high accuracy and time precision. 
The experiment shows that our approach achieves the new state-of-the-art performance across all subtasks in terms of the F1 measure on the Maestro dataset.
\end{abstract}

\noindent \textbf{See appendix for post-camera-ready updates.}

\section{Introduction}

Automatic Music Transcription (AMT) transforms the audio signal of music performances into symbolic representations \cite{Benetos2019AutomaticMT}. 
In this work, we focus on transcribing piano performance audio into its piano roll representation.\footnote{Code: \url{https://github.com/Yujia-Yan/Transkun}}
The piano roll representation, as formulated in \cite{yan2021skipping}, can be abstracted as consisting of sets of non-overlapping {time} intervals of the form $[\textrm{onset}, \textrm{offset}]$, with each set corresponding to one particular event type, e.g., a specific note or pedal.

Recent strategies to handle the problem of outputting this structured representation fall into three main categories:
1) Keypoint detection and assembly:
This approach involves identifying the onsets, offsets, and frame-wise activations of notes and then assembling these elements together with a handcrafted post-processing step.
Examples include~\cite{Hawthorne2018Onset, Kong2020HighResolutionPT, Toyama2023AutomaticPT}; 
2) Structured prediction with a probabilistic model:
Models in this category use a probabilistic model to ensure the structure of the output to be sets of non-overlapping intervals, 
e.g., \cite{yan2021skipping, kelz2019deepadsr, kwon2020polyphonic};
3) Sequence-to-sequence (Seq2Seq) methods\footnote{Strictly speaking, the Seq2Seq approach can also be categorized as a probabilistic model for structrued prediction. We isolate it here for simplifying the discussion.}:
These methods, such as \cite{Hawthorne2021SequencetoSequencePT}, treat music transcription as a machine translation problem, which translates audio to tokens that encode the target symbolic representation.

Our study focuses on the neural semi-Markov Conditional Random Field (semi-CRF) framework~\cite{yan2021skipping} from the second category, which directly models each music event (note or pedal) as a closed time interval associated with a specific event type.
The approach employs a neural network to score interval candidates and uses dynamic programming to decode non-overlapping intervals. 
This framework eliminates the need for separate keypoint detection and assembly steps in the first category but outputs the events (intervals) in a single stage.
Compared to other methods in the second category, e.g. \cite{kelz2019deepadsr, kwon2020polyphonic}, it does not need hand-crafted state definitions and state transitions.
Additionally, it benefits from optimal decoding in a non-autoregressive fashion as opposed to the slow autoregressive and suboptimal decoding in Seq2Seq methods (the third category).

This paper builds upon, simplifies, and improves the neural semi-CRF framework \cite{yan2021skipping} for piano transcription.
Our major contributions are as follows.
First, we replace the original scoring module that assigns a score for every possible interval with a simpler and more efficient pairwise inner product operation. 
Specifically, we prove that due to the special structure of encoding non-overlapping intervals, under a mild condition, the inner product operation is expressive enough to represent an ideal scoring matrix that can yield the correct transcription decoding. 
Second, inspired by the resemblance between the proposed inner product operation and the attention mechanism in the transformer \cite{Vaswani2017AttentionIA}, we use the transformer architecture to produce the interval representation for inner product scoring.
We demonstrate that an encoder-only non-hierarchical transformer backbone, operating only on a low-time-resolution feature map, is capable of transcribing notes with high accuracy and time precision. 
Third, we compare our method against state-of-the-art piano transcription systems on the Maestro v3 dataset, showing that our method establishes the new state of the art across all subtasks in terms of the F1 score.

\section{Related Work}

\subsection{Neural Semi-CRF for Piano Transcription}
Previous work of \cite{yan2021skipping} introduced a neural semi-Markov Conditional Random Field (semi-CRF) framework for event-based piano transcription, where each event (note or pedal) is represented as a closed interval associated with a specific event type. 
The approach employs a neural network to score interval candidates and uses dynamic programming to decode non-overlapping intervals. 
After interval decoding, interval-based features are used to estimate event attributes, such as \textit{MIDI velocity} and \textit{refined onset/offset positions}\footnote{For dequantizing onset/offset positions from quantized positions.}.

The neural semi-CRF can be viewed as a general output layer, similar to a softmax layer, but tailored for handling non-overlapping intervals. 
For a sequence of $T$ frames, let $\gY$ denote a set of non-overlapping closed intervals. 
The semi-CRF layer for $\gY$ takes two inputs for each event type:
\begin{enumerate}[nosep]
\item 
$score(i, j)$: A $T \times T$ triangular matrix that scores every candidate interval $[i,j]$ for inclusion in $\gY$. The diagonal values $score(i,i)$ represent single-frame events.
\item
$score_{\epsilon}(i-1, i)$: A $(T-1)$-dimensional vector that assigns a score to every interval $[i-1, i]$ not covered by any interval in $\gY$, serving as an inactivity score.

\end{enumerate}

Both $score(i, j)$ and $score_{\epsilon}(i-1,i)$ are computed using a neural network from the audio input $\gX$. The total score for $\gY$, given $\gX$, is:
\begin{equation}
\Phi(\gY| \gX) = \sum_{ [i,j] \in \gY } score(i,j) + \mkern-0mu\sum_{\substack{[i-1, i] \\\textrm{ not covered} \\\textrm{in } \gY }} score_{\epsilon}(i-1, i).
\end{equation}
For inference, maximum a posteriori (MAP) is used to infer the optimal set of non-overlapping intervals $\gY^*$:
\begin{equation}
\gY^* = \argmax_{\gY} \Phi(\gY| \gX).
\label{eqn:semiCRFInfer}
\end{equation}
For training, the maximum likelihood approach is used, with the conditional log-likelihood defined as:
\begin{equation}
\log p(\gY| \gX) = \Phi(\gY|\gX) - \log \sum_{\gY'} \exp \Phi(\gY'| \gX).
\label{eqn:semiCRFLikelihood}
\end{equation}
Here, $\argmax$ in \cref{eqn:semiCRFInfer}, and the summation in the second term in \cref{eqn:semiCRFLikelihood} are over all possible sets of non-overlapping intervals.
We refer the readers to \cite{yan2021skipping} for algorithmic details.

To make predictions for all event types (88 keys + pedals), multiple instances of semi-CRF are used in parallel, each corresponding to a specific event type.

\subsection{Vision Transformer and YOLOS}

The Vision Transformer (ViT) \cite{dosovitskiy2021an} introduced a significant shift in computer vision, offering an alternative to traditional CNN models. 
ViT processes images as sequences of fixed-size patches using transformer layers \cite{Vaswani2017AttentionIA}, proving successful across various tasks. 
{
For end-to-end object detection, YOLOS \cite{Fang2021YouOL} demonstrated a minimal, non-hierarchical encoder-only design that appends [DET] tokens (representing object slots) directly to image patch tokens as input to the transformer encoder. 
Our architecture adopts a similar encoder-only design for event-based music transcription.
}

\section{Revisiting Interval Scoring for Semi-CRFs}
The neural semi-CRF framework crucially relies on modeling the interval scoring matrix, $score(i,j)$, which assigns a score to each candidate interval.
The size of the matrix, which grows quadratically with the sequence length, poses a challenge to designing an efficient and expressive model architecture. 
For this discussion, $score_{\epsilon}$ will be excluded due to its minimal impact on model performance from our observation and negligible modeling challenges.

\subsection{Interval Scoring in \cite{yan2021skipping}}
\label{sec:originalScoring}
In \cite{yan2021skipping}, a backbone model first transforms the input sequence $\gX =  [\vx_0, \ldots, \vx_{T-1}]$ into a sequence of feature vectors $ [\vh_0, \ldots, \vh_{T-1}]$.
Each interval $[i, j]$ is scored by applying an MLP to features computed from the interval, with the output dimension being the number of event types.
For simplicity, assuming only one event type to predict, the score is computed as
\begin{equation}
\label{eqn:originalScoring}
   score(i,j) = \textit{MLP}([\vh_i, \vh_j, \vh_i\odot \vh_j, \vm_1, \vm_2, \vm_3]),
\end{equation}
where $\vh_i$ and $\vh_j$ are feature vectors corresponding to the interval's onset and offset, $\odot$ denotes element-wise multiplication, and $\vm_1, \vm_2, \vm_3$ are the first, second, and third statistical moments over the interval $[i, j]$.

After producing the initial interval scoring matrices for all event types, a shallow CNN is applied, treating the interval endpoints as spatial coordinates and event types as channels. 
This refinement step slightly improves the result. 

Directly computing \cref{eqn:originalScoring} and the subsequent refinement step are memory intensive.
The official implementation processes the scoring matrix in segments and applies gradient checkpointing during training, reducing peak memory usage at the cost of increased computational time. 
Consequently, the MLP and CNN layers' depth and width are constrained, potentially limiting the model's capacity and increasing susceptibility to local pattern overfitting.

\subsection{Interval Scoring with Inner Product} \label{sec:innerProductScoring}
We propose to use the following method for interval scoring:
\begin{equation}
score(i,j) = \frac{|j-i|}{\sqrt{D}}\langle\vq_i, \vk_j \rangle + b_i \delta(i, j),
\label{eqn:innerProductScoring}
\end{equation}
where $\delta(i,j)$ is the Kronecker delta, which is $1$ if $i = j$ and $0$ otherwise. 
$\vq_i \in \R^{D}$, $\vk_i \in \R^{D}$ and $b_i \in \R$ are computed from the embedding vector $\vh_i$ using a linear layer $f$:
\begin{equation}
   [\vq_i, \vk_i, b_i]  = f(\vh_i). 
   \label{eqn:kqbMapping}
\end{equation}
The interval scoring matrix computed from \cref{eqn:innerProductScoring} takes a low-rank plus diagonal structure.
This method, termed \newterm{Scaled Inner Product Interval Scoring}, computes the score of an event as the scaled inner product between vectors $\vq_{i}$ and $\vk_j$ representing the start and the end of the interval.

Despite its simplicity and resemblance to the attention mechanism in transformers, one question arises about the expressiveness of the inner product for capturing the transcription result.
We answer this question by constructing a family of interval scoring matrices that can yield the correct decoded result, and then show that this family of matrices can be represented in the form of pairwise inner product under certain conditions.

Without loss of generality, we ignore the intervals of form $[i,i]$, which correspond to the diagonal values in the interval scoring matrix; they can be added back as diagonals as in \cref{eqn:innerProductScoring}.
Additionally, since only the upper triangular part of the interval scoring matrix is used, we use the notation for a full matrix to simplify the derivation.
We begin by defining a set of nonoverlapping closed intervals.

\begin{definition} \label{def:nonoverlappingIntervals}
Let $\gY$ be a set of closed intervals defined on 
$\sN \cap [0, T-1]$, i.e., $T$ steps.
It is a set of \textit{non-overlapping} intervals if for any two intervals $[i_0, j_0] \in \gY$ and $[i_1, j_1] \in \gY$, $i_0\ge j_1$ or $i_1 \ge j_0$,
 and, additionally, $\forall [i, j] \in \gY, i< j$.
\end{definition}

\begin{definition}
   An ideal interval scoring matrix for $\gY$ over $T$ steps, i.e., $\mS_{\gY}\in \R^{T\times T}$, is a matrix such that 
   \begin{equation*}
   \begin{aligned}
    &\mS_{\gY}(i,j) > 0,  \quad&
    \forall [i,j]\in \gY,  \\
    &\mS_{\gY}(i,j) = -\epsilon,& \textit{ otherwise} 
    \end{aligned}
   \end{equation*}
   where $\epsilon>0$.
\end{definition}

With an ideal scoring matrix $\mS_{\gY}$, it is clear that the MAP decoding will yield $\gY$, since the exclusion of $\forall [i,j]\in \gY$ or the inclusion of $\forall [i,j] \notin \gY$ will decrease the total score.

\begin{lemma}
\label{lemma:idealRank}
    The rank of an ideal interval scoring matrix $S_{\gY}$ for a set of non-overlapping intervals, $\gY$, is $M+1$, where $M = |\gY|$, which is the number of intervals.
\end{lemma}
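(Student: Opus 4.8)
The plan is to compute the rank directly by describing the row space of $\mS_{\gY}$ explicitly. I would first write the matrix as a background term plus a sparse correction: setting $s_k := \mS_{\gY}(a_k,b_k) > 0$ for the $k$-th interval $[a_k,b_k]\in\gY$, we have $\mS_{\gY} = -\epsilon\,\vone\vone^\top + \sum_{k=1}^{M}(s_k+\epsilon)\,\ve_{a_k}\ve_{b_k}^\top$, where $\vone$ is the all-ones vector and $\ve_i$ the standard basis vectors. The first term has rank $1$, and the correction has rank exactly $M$ because (as I argue next) its $M$ nonzero entries lie in pairwise distinct rows and columns, so up to permutations it is diagonal. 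Subadditivity of rank then gives the upper bound $\operatorname{rank}(\mS_{\gY}) \le M+1$ immediately; the substance of the lemma is the matching lower bound.

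The key structural input is \cref{def:nonoverlappingIntervals}. I would observe that the non-overlapping condition, together with $i<j$ for every interval, lets me sort $\gY$ as $[a_1,b_1],\ldots,[a_M,b_M]$ with $a_1 < b_1 \le a_2 < b_2 \le \cdots \le a_M < b_M$. Consequently the start indices $a_1 < \cdots < a_M$ are pairwise distinct and the end indices $b_1 < \cdots < b_M$ are pairwise distinct (justifying the rank-$M$ claim above). Two further consequences drive the lower bound: since $a_k < b_k \le b_M$ for all $k$, the largest end index $b_M$ is never a start index, so row $b_M$ of $\mS_{\gY}$ equals the pure background row $-\epsilon\,\vone^\top$; and since $b_k \ge b_1 > a_1$ for all $k$, the smallest start index $a_1$ is never an end index.

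I would then identify the row space. Every row is either a background row equal to $-\epsilon\,\vone^\top$, or a start-point row $a_k$ equal to $-\epsilon\,\vone^\top + (s_k+\epsilon)\,\ve_{b_k}^\top$. Because $b_M$ supplies an actual background row and $\epsilon>0$, the vector $\vone^\top$ lies in the row space; subtracting it from each start-point row and dividing by $s_k+\epsilon>0$ shows each $\ve_{b_k}^\top$ does too. Hence the row space is precisely $\operatorname{span}\{\vone^\top,\ve_{b_1}^\top,\ldots,\ve_{b_M}^\top\}$. These $M+1$ vectors are linearly independent: the $\ve_{b_k}^\top$ are distinct coordinate vectors, and $\vone$ cannot lie in their span because its coordinate at index $a_1$ is $1$ whereas every $\ve_{b_k}$ vanishes there (as $a_1\neq b_k$). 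Therefore the dimension of the row space is $M+1$, giving $\operatorname{rank}(\mS_{\gY}) = M+1$.

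I expect the main obstacle to be exactly the independence of $\vone^\top$ from the $\ve_{b_k}^\top$, i.e.\ ruling out cancellation that would drop the rank to $M$. This is where the non-overlapping hypothesis is indispensable: it guarantees the end indices $\{b_k\}$ do not exhaust all $T$ columns (column $a_1$ is missed), so the constant direction $\vone$ genuinely enlarges the row space. A degenerate configuration in which every column were an endpoint would make $\vone$ redundant and the rank only $M$, so the argument must—and does—use the structure of $\gY$ rather than merely counting nonzero entries.
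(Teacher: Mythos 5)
Your proof is correct and takes essentially the same route as the paper: both split $\mS_{\gY}$ into the rank-one background $-\epsilon\vone\vone^\top$ plus a rank-$M$ generalized-permutation correction (the paper realizes this by subtracting the first column, you by identifying the row space as $\mathrm{span}\{\vone^\top,\ve_{b_1}^\top,\ldots,\ve_{b_M}^\top\}$), and both then certify that the constant direction is independent of the $M$ sparse directions. The only cosmetic difference is the independence witness: you use the coordinate $a_1$, which no $\ve_{b_k}$ can cover, while the paper counts nonzeros ($M\le T-1<T$); the two are interchangeable.
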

\begin{proof}

By definition, the first column is $-\epsilon\vone$, that is, $\forall i, \mS_{\gY}(i , 0) = -\epsilon$.
Subtracting the first column from all columns gives $\mS'_{\gY}$ such that
\begin{equation*}
\begin{aligned}
&\mS'_{\gY}(i,j) > \epsilon,  \quad&
\forall [i,j]\in \gY,  \\
&\mS'_{\gY}(i,j) = 0, & \textit{otherwise} 
\end{aligned}
\end{equation*}
Given that no two non-zero entries in $\mS'_{\gY}$ share a row or column (as per the definition of set of non-overlapping intervals), and there are $M$ non-zero entries, the rank of $\mS'_{\gY}$ is $M$.
Since there are at most $T-1$ non-overlapping intervals across $T$ frames, we have $M \le T-1$, and the number of nonzero entries in $S'_{\gY}$ is smaller than or equal to $T-1$.
As a result, $-\epsilon \vone$ ($T$ non-zeros) cannot be represented by a linear combination of other nonzero columns in $\mS'_{\gY}$, therefore ${rank(\mS_{\gY}) = rank(\mS'_{\gY})+1 = M+1}$.
\end{proof}

\begin{theorem}
\label{thm:intervalInnerProduct}
   Let $\gY$ be a set of non-overlapping closed intervals over $T$ steps, with cardinality $M$. 
   An ideal interval scoring matrix $\mS_{\gY}$ can be represented as pairwise inner products between two 1d sequences $(\vk_i)_i$ and $(\vq_i)_i$ of vectors:
   
   \begin{equation}
        \mS_{\gY}(i, j) =  
        \langle\vq_i, \vk_j \rangle,
   \end{equation}
 provided that  
  $rank(\mQ_{\gY}) > M $ and $rank(\mK_{\gY})>M$
  where
  $\mQ_{\gY} = [\vq_0, \ldots, \vq_{T-1}]$, and 
  $\mK_{\gY} = [\vk_0, \ldots, \vk_{T-1}]$. 
\end{theorem}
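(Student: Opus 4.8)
The plan is to translate the inner-product claim into a statement about matrix factorization and then invoke the rank already computed in \cref{lemma:idealRank}. Collecting the coordinate vectors as the columns of $\mQ_{\gY} = [\vq_0,\ldots,\vq_{T-1}]$ and $\mK_{\gY} = [\vk_0,\ldots,\vk_{T-1}]$, the entry $(i,j)$ of $\mQ_{\gY}^{\top}\mK_{\gY}$ is precisely $\vq_i^{\top}\vk_j = \langle\vq_i,\vk_j\rangle$, so the desired identity $\mS_{\gY}(i,j) = \langle\vq_i,\vk_j\rangle$ is the same as the matrix equation $\mS_{\gY} = \mQ_{\gY}^{\top}\mK_{\gY}$. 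The theorem therefore reduces to exhibiting a factorization of $\mS_{\gY}$ whose two factors each have rank exceeding $M$. The only nontrivial ingredient, namely $\mathrm{rank}(\mS_{\gY}) = M+1$, is supplied by \cref{lemma:idealRank}; the rest is linear algebra.

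First I would produce the factorization constructively via a thin singular value decomposition $\mS_{\gY} = \mU\,\mSigma\,\mV^{\top}$, where the rank count forces $\mU,\mV \in \R^{T\times(M+1)}$ to have orthonormal columns and $\mSigma \in \R^{(M+1)\times(M+1)}$ to be diagonal with strictly positive entries. Setting $\mQ_{\gY} = \mSigma^{1/2}\mU^{\top}$ and $\mK_{\gY} = \mSigma^{1/2}\mV^{\top}$, both in $\R^{(M+1)\times T}$, a direct computation gives $\mQ_{\gY}^{\top}\mK_{\gY} = \mU\,\mSigma\,\mV^{\top} = \mS_{\gY}$, which is the required inner-product representation with embedding dimension $D = M+1$. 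Since $\mSigma^{1/2}$ is invertible and $\mU,\mV$ have full column rank $M+1$, both factors satisfy $\mathrm{rank}(\mQ_{\gY}) = \mathrm{rank}(\mK_{\gY}) = M+1 > M$, which establishes the stated rank condition.

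To argue that this condition is forced rather than merely convenient, I would appeal to submultiplicativity of rank: for any factorization $\mS_{\gY} = \mQ_{\gY}^{\top}\mK_{\gY}$ one has $\mathrm{rank}(\mS_{\gY}) \le \min\{\mathrm{rank}(\mQ_{\gY}),\mathrm{rank}(\mK_{\gY})\}$, so $M+1 \le \min\{\mathrm{rank}(\mQ_{\gY}),\mathrm{rank}(\mK_{\gY})\}$, i.e.\ both factors must have rank strictly greater than $M$. Thus any inner-product realization of an ideal scoring matrix necessarily uses an embedding dimension at least $M+1$, and the SVD construction attains this bound exactly, so the condition is both necessary and achievable.

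I do not anticipate a genuine obstacle: the substantive work is the rank count already done in \cref{lemma:idealRank}, and the remainder is a standard rank factorization. The one point requiring care is that the representation is asymmetric, pairing two distinct families $(\vq_i)_i$ and $(\vk_j)_j$ rather than a single family, so I would avoid any argument that implicitly assumes $\mS_{\gY}$ is symmetric or positive semidefinite, since it is neither. The SVD route handles this automatically by assigning $\mU$ to the $\vq$ side and $\mV$ to the $\vk$ side; alternatively, to avoid orthogonalization entirely, any full-rank factorization $\mS_{\gY} = \mB\mC$ with $\mB \in \R^{T\times(M+1)}$ and $\mC \in \R^{(M+1)\times T}$ works equally well by taking $\mQ_{\gY} = \mB^{\top}$ and $\mK_{\gY} = \mC$.
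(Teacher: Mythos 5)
Your proposal is correct and follows essentially the same route as the paper, whose entire proof is to invoke \cref{lemma:idealRank} for $\mathrm{rank}(\mS_{\gY}) = M+1$ and then appeal to rank factorization; your thin SVD is simply an explicit instance of that factorization. The only additions beyond the paper's argument are welcome but minor: you make the factorization constructive and observe, via $\mathrm{rank}(\mQ_{\gY}^{\top}\mK_{\gY}) \le \min\{\mathrm{rank}(\mQ_{\gY}), \mathrm{rank}(\mK_{\gY})\}$, that the rank condition is necessary as well as achievable, and you correctly flag that no symmetry or positive semidefiniteness of $\mS_{\gY}$ may be assumed.
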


\begin{proof}
By \cref{lemma:idealRank}, the rank of $\mS_{\gY}$  is $M+1$. 
Then it directly follows the rank factorization of a matrix.
\end{proof}

\Cref{thm:intervalInnerProduct} establishes a minimum rank requirement for $\mQ_{\gY}$ and $\mathbf{K}_{\gY}$ to represent an ideal scoring matrix.
This leads to two key observations:
\begin{enumerate}[nosep]
\item
 The vector dimensions $D$ of $\vk_i$ and $\vq_i$ must exceed the total number of intervals, $|\gY|$.

\item 
Consider a linear upsampling operator $u_c$, which is a special case of a 1-d transposed convolutional layer. 
It works by dividing each step of a vector sequence into $c$ equal parts when the sequence is upsampled $c$ times.
Suppose we want to represent $\mQ_\gY$ and $\mK_{\gY}$ using low-resolution 1-d vector sequences: 
$\mQ'_{\gY} = [\vq'_0, \ldots, \vq'_{T'-1}]$ and 
$\mK'_{\gY} = [\vk'_0, \ldots, \vk'_{T'-1}]$
where $T'< T$, 
and this representation is achieved by applying $u_c$ to $\mQ'_{\gY}$ and $\mK'_{\gY}$, resulting in
$\mQ_\gY = u_c(\mQ'_\gY) $, and $\mK_\gY = u_c(\mK'_\gY) $, where $c=T/T'$ represents the upsampling factor.
For this representation to be valid, the vector dimension $D'$ for the low-resolution sequence, i.e., $\vq'_i$ and $\vk'_i$ should exceed $c |\gY|$.

\end{enumerate}
These observations highlight that the dimensionality requirement depends solely on the count of intervals in $\mathcal{Y}$ and the downsampling (upsampling) factor $c = T/T'$ along the time axis.
This analysis reveals sufficient conditions to guarantee the expressiveness of the inner product interval scoring method.
From \cref{thm:intervalInnerProduct}, by applying a scaling factor\footnote{Note that applying a length-dependent scaling on the ideal scoring matrix does not change the decoded result.} and reintegrating diagonal terms, we can recover \cref{eqn:innerProductScoring}.

\subsection{Comparison with Attention Mechanism}
\label{sec:connectionAttention}
Comparing the neural semi-CRF with the inner product scoring to the attention mechanism reveals interesting parallels. 
Both of them have quadratic time complexity in the length of the input.
The original score module, as in \cite{yan2021skipping}, resembles an additive attention mechanism, as introduced by \cite{bahdanau2015neural}.
However, attention mechanisms based on inner products \cite{luong2015effective} have become preferred for their simplicity and computational efficiency. 
Similarly, the proposed inner product scoring for neural semi-CRFs efficiently scores intervals.
However, in contrast to attention mechanisms that score sequence positions and normalize posteriors for each position, neural semi-CRFs score intervals and normalize posteriors globally over sets of non-overlapping intervals.

The Transformer architecture can be viewed as inherently refining a sequential representation for inner product scoring.
Inspired by these similarities, we utilize the transformer architecture to produce the 1-d sequence representations $(\vh_{i}^{\textrm{eventType}})$ for each event type, termed \newterm{event tracks}, which will be used for inner product interval scoring.

\section{Proposed System} \label{sec:proposedSystem}
\begin{figure}[htb]
    \centering
    \includegraphics[width=0.9\linewidth, trim={0 0.30cm 0.9cm 0.0cm},clip]{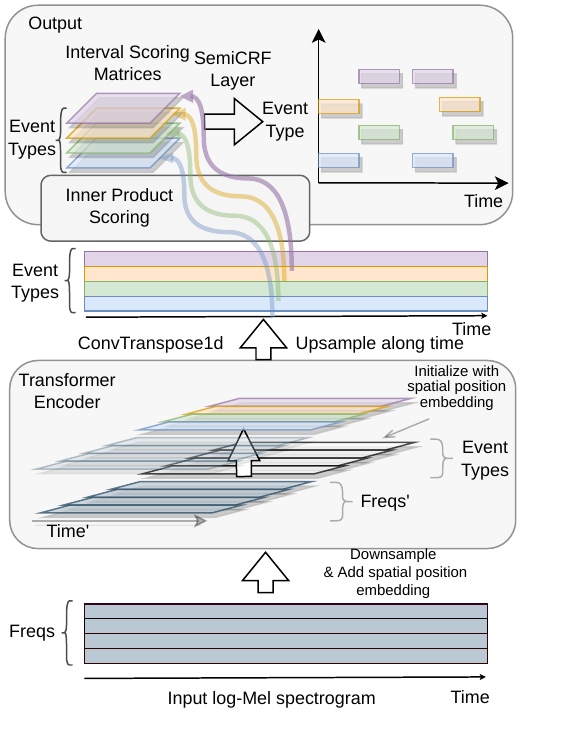}
    \caption{Overview of the proposed system. {Inner product scoring follows \cref{eqn:innerProductScoring}.}}
    \label{fig:overview}
\end{figure}

Figure 1 summarizes the proposed system.
The input is an oversampled log-mel spectrogram, as in \cite{yan2021skipping}.
The spectrogram is downsampled using {2-d} strided convolutional layers, followed by the addition of spatial position embeddings (\cref{sec:encoder}). 
Event tracks for all event types (notes and pedals) are initialized with their own spatial position embeddings and concatenated with the downsampled spectrogram representations.
The concatenated features are processed by a transformer encoder. 
Subsequently, only the event track embeddings are upsampled using one 1-d transposed convolutional layer. 
The upsampled event tracks are used for inner product interval scoring (\cref{eqn:innerProductScoring}) to generate interval scoring matrices, which are then fed to the neural semi-CRF layer for log-likelihood calculation or inference.

\subsection{Rethinking Downsampling}

Existing studies on Vision Transformers (ViTs) demonstrate the effectiveness of {a non-hierarchical design that uses} highly downsampled, low-resolution feature maps even for tasks requiring dense predictions, e.g., \cite{Yanghao2022Plain}, {challenging the dominance of hierarchical models like UNET \cite{Ronneberger2015UNET}}.  
However, state-of-the-art (SOTA) piano transcription systems, including \cite{yan2021skipping,Toyama2023AutomaticPT, Kong2020HighResolutionPT, Hawthorne2021SequencetoSequencePT}, retain full resolution along the time axis. 
These approaches preserve the temporal detail of the input frames, but at the cost of increased training time and reduced model scalability.

This choice might be explained by concerns over losing temporal precision when locating events.
However, we argue that the high dimensionality of the embeddings makes the low temporal resolution feature map still capable of processing with enough information.

In our approach, we use strided convolutional layers to downsample the input spectrogram, along both the time and frequency axes, transforming it from its original spatial dimensions $(T, F)$ to a low-resolution feature map with dimensions $(T', F') = (\frac{T}{c_{T}}, \frac{F}{c_{F}})$. 
In line with the ViT literature, we refer to this reduced feature map as \textit{patch embeddings for $c_T\times c_F$ patches}. 
The choice of patch size $(c_T, c_F)$ may present a trade-off between computational efficiency and the model's capacity to capture dense events in the input spectrogram.
As an initial exploration, we use a patch size of $8\times 4$ to keep the training time within our expected range.

To upsample event tracks to the original temporal resolution of frames, we utilize a single transposed 1-d convolutional layer.
We found that this simple upsampling layer efficiently prepares representations for inner product scoring at the desired resolution.%

\subsection{Transformer Encoder Architecture}
\label{sec:encoder}
    
\begin{figure}[htb]
     \centering
     \vspace{-1em}
     \begin{subfigure}[t]{0.35\columnwidth}
         \centering
         \includegraphics[height=3.5cm,  trim={0 0.2cm 0.0cm 0.0cm},clip]{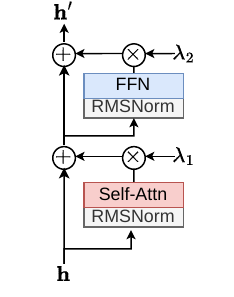}
         \caption{Transformer Block.}
         \label{fig:basicBlock}
     \end{subfigure}
     \hfill
     \begin{subfigure}[t]{0.35\columnwidth}
         \centering
         \includegraphics[height=3.5cm,  trim={0 0.2cm 0 0cm},clip]{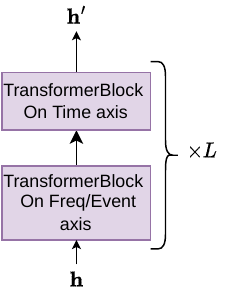}
         \caption{Encoder Layers}
         \label{fig:separableBlock}
     \end{subfigure}
        \caption{Building Blocks for the Transformer Encoder}
        \label{fig:buildingBlocks}
        \vspace{-1em}
\end{figure}

\paragraph{Spatial Position Embedding.}
We use learnable Fourier features for spatial position embeddings \cite{li2021learnable} for both time-frequency representations with coordinates $(\textit{frameIdx}, \textit{freqIdx})$, and event tracks with coordinates $(\textit{frameIdx}, \textit{eventTypeIdx})$. 
{This position embedding is chosen for its simplicity and broad compatibility with transformer architectures.}
Our formula differs slightly from \cite{li2021learnable} as we follow the formula in the original random Fourier features paper \cite{Rahimi2007RandomFF}.
We compute the position embedding $\vy \in \R^{E}$ from a multidimensional coordinate $\vx\in \R^{C}$ as: 
\begin{equation}
\begin{aligned}
\vy = g( \sqrt{\frac{2}{B}}\cos (\mW_r \vx + \vb) ),
\end{aligned}
\end{equation}
where $\mW_r$ is a learnable matrix $\R^{B \times C}$, initialized from $\normal (0, \gamma^{-2})$; $B$ is the dimension for the Fourier features; $\gamma$ is a hyperparameter; $\vb \in \R^{B}$ is the learnable bias term, initialized from $\mathcal{U}(-\pi, +\pi)$; $g: \R^{B} \to \R^{E}$ is a two-layer perceptron. 
This position embedding functions like an MLP that takes coordinates as input, with the first nonlinearity being a scaled cosine function.

\paragraph{The Transformer Encoder Layer.}
\Cref{fig:basicBlock} illustrates the basic transformer block.
This block first applies \textit{RMSNorm}\cite{zhang-sennrich-neurips19}  before the self-attention and feed-forward layers. 
To enhance training stability, we use ReZero \cite{pmlr-v161-bachlechner21a} which applies a learnable scaling factor $\lambda$, initially set to $0.01$, before adding to the skip connection. 
As in \Cref{fig:separableBlock}, for reducing computational cost, we alternate attention within each transformer block along the time and frequency/eventType axes; similar ideas are often used for efficient transformer architectures~\cite{Ho2019AxialAI, Ristea-INTERSPEECH-2022, Lu2024ROPE}.

\subsection{Segment-Wise Processing}
\label{subsec:segmentProcessing}
Longer audio is transcribed using segments with 50\% overlap.
Unlike~\cite{yan2021skipping}, which discards events that exceed the segment boundary during training, we truncate such events to fit within the segment.
We introduce two binary attributes, $\textit{hasOnset}$ and $\textit{hasOffset}$, to indicate whether an event's onset or offset has been truncated.

{
For each event type within a segment, decoding starts from either: (1) the current segment's boundary, or (2) the offset of the last event in the result set with $\textit{hasOffset}=\textit{true}$, whichever is later.
Events decoded in the current segment are then processed as follows: (1) non-overlapping events with $\textit{hasOnset}=\textit{true}$ are directly added to the result set; (2) for events overlapping with the last event of the same type in the result set: if the current event has $\textit{hasOnset}=\textit{true}$, it replaces the last event\footnotemark; otherwise, the two events are merged.
}

\footnotetext{{For overlapping events between segments: (1) The first event must have $\textit{hasOffset}=\textit{false}$. (2) A continuing second event must have $\textit{hasOnset}=\textit{false}$. (3) If the second event's $\textit{hasOnset}=\textit{true}$, the first event is replaced by the second event as it's not supported by the second.}}

\subsection{Attribute Prediction}
Attributes associated with each event include \textit{velocity}, \textit{refined onset/offset positions} (for dequantizing frame positions), and the binary flags $\textit{hasOnset}$ and $\textit{hasOffset}$. 
To predict these attributes for an event extracted from the event track $(\vh_i^{\text{eventType}})_{i=0}^{T-1}$, e.g., $[a, b]$, we use a two-layer MLP that takes $\vh_a^{\text{eventType}}$ and $\vh_b^{\text{eventType}}$ as input. 
The MLP outputs the parameters of the probability distributions for each attribute. Specifically, $\textit{velocity}\in \{0\ldots, 127\}$ is modeled as a categorical distribution, \textit{refined onset/offset positions} $\in (-0.5,0.5)$ are modeled as continuous Bernoulli distributions \cite{Loaiza2019CB} shifted by $-0.5$, and $\textit{hasOnset}/\textit{hasOffset} \in \{0,1\}$ are modeled as Bernoulli distributions.

\section{Experiment}

\subsection{Dataset} \label{sec:dataset}
\textbf{Maestro v3.0.0 \cite{hawthorne2018enabling}.}
This dataset contains about 200 hours of piano performances, including audio recordings and corresponding MIDI files captured using Yamaha Disklavier pianos. We use the standard train/validation/test splits.

\paragraph{MAPS \cite{Emiya2010MAPSA}.}
The MAPS dataset includes both synthesized and real piano recordings, with the real recordings captured by MIDI playback on Yamaha Disklavier. We evaluate our model on the Disklavier subset (ENSTDkAm/MUS and ENSTDkCl/MUS) of the MAPS dataset, which consists of 60 recordings and is commonly used for cross-dataset evaluation.
However, we discovered systematic alignment issues in the ground-truth annotations for both notes and pedals, affecting both onset and offset locations. 
Onset alignment issues have been previously reported in \cite{Cogliati2016CS} but are not widely known in the community\footnote{
A piece-dependent onset latency around 15 ms has been previously discussed in \cite{Cogliati2016CS}.
Due to the electro-mechanical playback mechanism, this latency could also be note/pedal dependent.
Offset deviation (up to approximately 70 ms) appears more complex and may be influenced by pedal-/note-dependent mechanical latency {or undocumented specific piano model's response to non-binary pedal values}. 
}.

\paragraph{SMD \cite{MuellerKBA11_SMD_ISMIR-lateBreaking}.} 
{Similar to Maestro dataset, the SMD dataset was created by recording human performance on a Yamaha Disklavier.}
We use SMD version 2. The dataset contains 50 recordings. We found that both the onset and offset annotations in SMD are better aligned compared to MAPS.

\subsection{Model Specification}

The key model specifications are summarized in \cref{table:modelSpec}.
Training takes about 6 days on 2 \textit{NVIDIA RTX 4090}. 

\setlength{\tabcolsep}{4pt}
\begin{table}[H]
\begin{adjustwidth}{-1.1in}{-1in} %
\begin{center}
\scriptsize
\begin{tabularx}{\columnwidth}{|l X|}
   \hline
   Input Mel Spectrogram& sr: 44100 Hz, hop: 1024, window size: 4096,  subwindows:5, mels: 229, freq: 30-8000 Hz, segment: 16s, \\
   \hline
   Patch & shape: $8\times 4$, embeding size: 256 \\
   \hline
   Strided Conv. Layers& initial proj. size: 64, added with freq. embeddings.    \\
   for Downsampling &out channels: [128, 256, 256, 256], kernel size: 3, strides: [(2,1), (2,2), (2,2), (1,1)],
    Each followed by GroupNorm, groups = 4, and GELU (except for the last conv.)\\
   \hline
   Position Embedding& $\gamma=1$, $|B| = 256$, MLP hidden size 1024 \\
   \hline
   Transformer Encoder& 8 heads, 6 layers (=12 blocks), FNN size: 1024\\
   \hline
   Upsampling  & 1d. transposed conv, out: 128, kernel size:8, stride:8 \\
   \hline
   Attribute Prediction & two layer MLP, hidden size: 512, dropout 0.1 \\
   \hhline{|=|=|}
   Batch Size & 12\\
   \hline
   Optimizer & Adabelief \cite{zhuang2020adabelief}, maximum learning rate: $4\mathrm{e}{-4}$ \\
   \hline
   Weight Decay & $1\mathrm{e}{-2}$, excluding bias, norm., and pos. embedding\\
   \hline
   Learning Rate Schedule & 500k iterations, $5\%$ warm-up phase, cosine anneal.\\
   \hline
   Gradient Clipping & Clipping norms at $80\%$ quantile of past 10,000 iterations\\
   \hline

\end{tabularx}
\end{center}
\end{adjustwidth}
\caption{Model Specification. }
\label{table:modelSpec}
\end{table}

\subsection{Evaluation Metrics}

We compute precision, recall, and f1 score averaged over recordings for both activation level (from \cite{yan2021skipping}, equivalent to frame level with infinitesimal hop size), and note level metrics (\textit{Note Onset}, \textit{Note w/Offset}, and \textit{Note w/Offset \& Vel.}, using \textit{mir\_eval}\cite{Raffel2014MIREval}, default settings).
All metrics are directly computed from transcribed MIDIs.
For details on these metrics, readers can refer to the supplementary material of \cite{yan2021skipping}, and the documentation of \textit{mir\_eval} \cite{craffel_mir_eval_transcription}.

Due to the ground-truth alignment issues discussed in \cref{sec:dataset} and space constraints, we only report activation-level and onset-only note-level metrics for MAPS and SMD.

\subsection{Results}

Our results on the Maestro v3 test set are presented in \cref{tab:noteResults}. 
The proposed model achieves state-of-the-art performance across all metrics in terms of f1 score, 
surpassing previous methods by a significant margin.
We also report results for soft pedal transcription which has not been previously explored.
The low event-level metrics suggest that accurately determining soft pedal onset and offset times is more challenging than for notes and sustain pedals. 
{We conjecture this is because soft pedals are typically engaged for longer durations and appear significantly less frequently in the dataset than sustain pedals.}

\paragraph{Scoring Methods Comparison.}  
We conducted an ablation study to compare our proposed inner product scoring with the more complex scoring method from \cite{yan2021skipping}.
We trained a model with an identical architecture but replaced the inner product scoring with the scoring module from \cite{yan2021skipping}. 
To ensure a fair comparison, we adjusted the hidden sizes of the scoring module to keep the training time for a single iteration within a factor of two of our proposed system. 
Specifically, all event tracks were projected to a single sequence with a dimension of 512, and the hidden size of the scoring module was set to 512. 
As shown in \cref{tab:noteResults}, our inner product scoring outperforms the more complex scoring method, demonstrating its effectiveness and efficiency.

Furthermore, we compared two variants of the inner product scoring: a linear layer and an MLP for computing the $\vk/\vq/b$ vectors ($f$ in \cref{eqn:kqbMapping}). The results demonstrate that the linear layer yields better performance than the MLP.
Interestingly, this aligns with how $\vk$ and $\vq$ are computed in transformers.

\paragraph{Effect of omitting incomplete events.} 
{
We found that omitting steps of handling incomplete events at segment boundaries (\cref{subsec:segmentProcessing}) only cause noticeable performance impact for pedals,  particularly the soft pedal (\cref{tab:noteResults}). 
This can be explained by the fact that pedal events, especially soft pedals, can often exceed the segment length, while notes are normally shorter than the segment length we choose.
}

\paragraph{Results on MAPS/SMD.} 
We evaluated our model on the MAPS dataset using three different ground-truth annotations: (1) Original, (2) Ad hoc Align, where the median deviation from the initial evaluation is subtracted from all notes for each piece and then re-evaluated, and (3) Cogliati, which subtracted a latency value per recording for ENSTDkCL as provided by \cite{Cogliati2016CS}. 
For the SMD dataset, only the original annotation is used. \Cref{tab:MAPSResults} presents the results.

All methods exhibit low activation-level F1 scores on MAPS. 
Using the onset-corrected annotation (Cogliati) on MAPS {increases the onset F1 score but} degrades the activation-level F1 score due to the uncorrected offset biases.
In fact, the Cogliati annotation achieves similar or lower activation-level F1 scores compared to all listed methods when evaluated against the original annotation.

All methods achieve F1 scores on SMD that are more comparable to those evaluated on Maestro. 
However, performance decreases significantly on MAPS, even with corrected annotations. 
This suggests that the dataset issue may be more complex than a simple piece-depedent timing shift.

Notably, the corrected annotations can lead to different conclusions compared to the original annotation. 
For example, while the data-augmented Onsets\&Frames model achieves a higher note onset F1 score than hFT using the original annotation, it scores lower than hFT when evaluated using the ad hoc  correction and the Cogliati annotation.

These observations highlight the need for caution when evaluating models on datasets created using mechanisms that may involve systematic biases, e.g., electromechanical playback. 
Despite these complications, our proposed system, with or without data augmentation\footnotemark, achieves the highest note onset F1 score among the compared methods on both SMD and MAPS with Ad hoc/Cogliati correction.
\footnotetext{Data augmentation: pitch shifting $\pm 20$ cents, adding noise from \cite{piczak2015dataset}, applying randomized 8 band EQ and impulse response from \cite{EchoThief}. }

\begin{savenotes}
\begin{table*}[htb]
\begin{adjustwidth}{-1in}{-1in} %
    \centering
    \scriptsize 
    \begin{tabular}{|r c ccc ccc ccc ccc|}
        \hline
        \textbf{Method} &  \textbf{\# Param%
        }
        &\multicolumn{3}{c}{\textbf{Activation}} & \multicolumn{3}{c}{\textbf{Note Onset}} & \multicolumn{3}{c}{\textbf{Note w/ Offset}} & \multicolumn{3}{c|}{\textbf{Note w/ Offset \& Vel.}} \\
        {} & &{P(\%)} & {R(\%)} & {$F_1$(\%)} & {P(\%)} & {R(\%)} & {$F_1$}(\%) & {P}(\%) & {R}(\%) & {$F_1$}(\%) & {P}(\%) & {R}(\%) & {$F_1$}(\%) \\
        \hline
{} & \multicolumn{13}{c|}{\textbf{Notes}} \\
\hline
SemiCRF \cite{yan2021skipping} 
&9.8M&
93.79&88.36&90.75&98.69&93.96&96.11&90.79&86.46&88.42&89.78&85.51&87.44\\
hFT, reported in \cite{Toyama2023AutomaticPT}
&5.5M&  92.82&93.66&93.24&\best{99.64}&95.44&97.44 & 92.52& 88.69& 90.53& 91.43&  87.67& 89.48
\\
hFT\cite{Toyama2023AutomaticPT} \setfootnotemark\label{tablenote:recompute} .
&5.5M&  95.37&90.82&92.93&\best{99.62}&95.41&97.43 & 92.22& 88.40& 90.23& 91.21&  87.44& 89.24
\\
\hline
Ours with scoring method in \cite{yan2021skipping}
&11.0M& 
93.79&92.40&93.06&98.61&95.92&97.23&91.69&89.23&90.43&91.08&88.64&89.83\\
Ours with MLP $\vk\vq b$  mapping 
&13.0M& 
95.66&94.79&95.20&99.54&96.91&98.19&94.39&91.92&93.12&93.84&91.40&92.59\\
Ours w/o incomplete events 
&12.9M& 
\best{93.76}&94.46&95.07&99.56&97.10&98.30&\best{94.66}&92.36&93.48&\best{94.12}&91.83&\best{92.95}\\
Ours
&12.9M&
{95.75}&\best{95.01}&\best{95.35}&99.53&\best{97.16}&\best{98.32}&94.61&\best{92.39}&\best{93.48}&{94.07}&\best{91.87}&{92.94}
\\
\hline
{} & \multicolumn{13}{c|}{\textbf{Sustain Pedals}} \\
\hline
Kong et al., reported in \cite{Kong2020HighResolutionPT}
&20.2M& 94.30&94.42&94.25&91.59&92.41&91.86&86.36&87.02&86.58
& - & - & - \\
Kong et al. \cite{Kong2020HighResolutionPT}$^{\ref{tablenote:recompute}}$  \setfootnotemark\label{tablenote:kong}
&20.2M& 94.14&94.29&94.11&77.43&78.19&77.71&73.56&74.21&73.81
& - & - & - \\
SemiCRF \cite{yan2021skipping}
&9.8M&
95.17&88.33&90.98&82.18&75.81&78.52&78.75&72.74&75.30 
& - & - & - \\
\hline
Ours w/o incomplete events 
& 12.9M& \best{96.69}&92.92&94.47& \best{89.10}& 83.96& 86.28& \best{86.33}&81.40&83.63 & - & - & - \\
Ours
& 12.9M& {96.67}&\best{94.46}&\best{95.40}&{88.96}&\best{84.22}&\best{86.37}&{86.19}&\best{81.66}&\best{83.71} & - & - & - \\
\hline

\hline
{} & \multicolumn{13}{c|}{\textbf{Soft Pedals}} \\
\hline
Ours w/o incomplete events& 12.9M 
&74.41&28.77& 36.54&20.24&9.08&11.69&17.19&7.51&9.76
& - & - & - \\
Ours& 12.9M 
&86.42&83.12&84.09&24.32&17.39&19.46&18.51&13.40&15.06
& - & - & - \\
\hline
 
\end{tabular}
\end{adjustwidth}
\caption{Transcription Result on Maestro v3.0.0 Dataset Test Split. }
\vspace*{-15pt}
\label{tab:noteResults}

\end{table*}
\end{savenotes}

\footnotetext[\getrefnumber{tablenote:recompute}]
              {
              Use their provided code and pretrained weights. Recomputed from transcribed MIDIs. 
              }
\footnotetext[\getrefnumber{tablenote:kong}]
             {
             Previous SOTA for sustain pedals. Their released code indicates a 200 ms onset tolerance for pedal evaluation, contrary to the reported 50 ms in their paper. Here, we use a 50 ms onset tolerance, which explains the large discrepancy between the numbers here and their reported results.
             }
             
\setlength{\tabcolsep}{2pt}
\begin{table}[htb]
\begin{adjustwidth}{-1.1in}{-1in} %
\begin{center}
    \scriptsize
    \begin{tabular}{|r c c ccc  ccc|}
       \hline
       \multicolumn{3}{|c}{} &
       \multicolumn{3}{c}{\textbf{Activation}} &
       \multicolumn{3}{c|}{\textbf{Note Onset}}   \\
       \hline
       \textbf{Method}&\textbf{Dataset}& \textbf{Groudtruth} 
       & {P}(\%) & {R}(\%) & {$F_1$}(\%)  
       & {P}(\%) & {R}(\%) & {$F_1$}(\%)  \\
        \hline
Onsets
&MAPS&Original & 90.27&80.33&84.87&87.40&85.56&86.41  \\
\&Frames\cite{hawthorne2018enabling} &MAPS&Ad hoc Align& 90.50&80.53&85.08&88.79&86.93&87.78 \\
w. Data Aug.$^{\ref{tablenote:recompute}}$&MAPS&Cogliati& 64.75&82.83&71.60&87.57&84.97&86.19 \\
\hline
hFT\cite{Toyama2023AutomaticPT}.$^{\ref{tablenote:recompute}}$
&MAPS&Original   & 91.53 & 71.03 & 79.81 & 84.63 & 85.75 & 85.13   \\
&MAPS&Ad hoc Align& 91.77 & 71.25 & 80.04& 87.32& 88.48 & 87.84 \\
&MAPS&Cogliati& 68.83 & 74.07& 70.24& 89.94&90.10& 89.97 \\
&SMD& Original &93.18&89.82&91.35&98.71&95.58&97.09 \\
\hline
Ours
&MAPS&Original&88.41&82.29&85.08&84.31&88.10&86.10 \\
&MAPS&Ad hoc Align &88.69&82.57&85.36&86.63&90.53&88.47 \\
&MAPS&Cogliati&65.74&84.69&72.78&89.60&91.39&90.44 \\

&SMD& Original & 92.36&95.24&93.73&98.16&97.65&97.89 \\
\hline
Ours
&MAPS&Original   &94.11&84.63&89.00&92.11&88.78&90.38 \\
w. Data Aug. &MAPS&Ad hoc Align& 94.35&84.84&89.22&94.21&90.76&92.41  \\
&MAPS&Cogliati&  67.77&87.39&75.03&94.66&91.43&92.98 \\
&SMD& Original & 93.38&95.91&94.57&99.77&97.68&98.70\\
\hline
\multicolumn{9}{|c|}{\textbf{Between Ground Truths}} \\
\hline
Cogliati \cite{Cogliati2016CS}
&MAPS& Original &98.86&69.22& 80.17 &100&100&100\\
\hline
\end{tabular}
\end{center}
\end{adjustwidth}
\caption{Transcription Result on MAPS and SMD. See Text for discussion of dataset issues.}
\label{tab:MAPSResults}
\end{table}

\section{Conclusion}
This paper introduces a simple and efficient method for scoring time intervals using scaled inner product operations for the neural semi-CRF framework for piano transcription. 
We demonstrate that the proposed scoring method is not only simple and efficient but also theoretically expressive for yielding the correct transcription result. 
Inspired by the similarity between the proposed scoring method and the attention mechanism, we employ a non-hierarchical, encoder-only transformer backbone to produce event track representations. 
Our method achieves state-of-the-art performance on the Maestro dataset across all subtasks.
Due to resource constraints, we have not evaluated the effect of patch and embedding sizes, which is left for future work. 
Additionally, future research could explore more advanced transformer architectures, investigate the interaction between transformer architecture and the neural semi-CRF layer, and extend the approach to other instruments and multi-instrument music transcription tasks.

\section{Acknowledgement}
{This work is supported in part by National Science Foundation (NSF) grants 1846184 and 2222129.}

\bibliography{main}

\appendix

\section{Changes from Previous Versions}\label{appendix:changes}

 We later discovered that the SMD version 1, which was used in the previous versions of the paper, missed all CC events, which affects all offsets of pedal-extended notes.
 We notified the original author and they provided the revised SMD version 2 \cite{SMDVersion2}. We updated the numbers in our paper to reflect this fix.

\end{document}